\documentclass[graybox]{svmult}
\flushbottom

\usepackage{mathtools}
\usepackage{array}













\usepackage{graphicx,color,bm,amsmath}





\usepackage[figuresright]{rotating}


\usepackage[]{graphicx}
\graphicspath{{../Figures/PDF/}{../Figures/Sources/PNG/}}

\usepackage{amssymb}


\usepackage{lineno}

\usepackage{float}
\usepackage{multirow}
\usepackage{booktabs}
\usepackage{subcaption}            
\usepackage{array}
\usepackage{xspace}   				
\usepackage{mathrsfs}
\usepackage{mdwlist}                 

\usepackage{subcaption}

\usepackage[english]{babel}
\usepackage[cyr]{aeguill}

\usepackage[linesnumbered,lined,boxed,commentsnumbered]{algorithm2e}

\usepackage[bordercolor=black,backgroundcolor=orange!60!white,linecolor=orange!60!white,figwidth=0.8\textwidth
]{todonotes}



\usepackage{listings}
\usepackage{xcolor} 


\usepackage[colorlinks=true,linkcolor=blue!20!black,urlcolor=black,citecolor=blue!20!black,hyperfootnotes=false]{hy
  perref}

\usepackage[normalem]{ulem}

\usepackage{dsfont} 
\newcommand{\reals}{\ensuremath{\mathds{R}}}

\newcommand{\Rset}{{\mathbb R}}
\newcommand{\At}{{\mathbb A}}

\newcommand{\vt}{{\mathbf d}}
\newcommand{\vd}{{\mathbf d}}

\newcommand{\ctr}[2]{\ensuremath{[\vd^{#1},\vd^{#2}]}}

\newcommand{\shspace}{\mathcal{H}_4}
\newcommand{\cspace}{\mathcal{F}}
\newcommand{\tensspace}{\mathcal{A}}



\begin{document}






\title*{Representing three-dimensional cross fields using 4th order tensors}


\institute{Alexandre Chemin \at Universit\'e catholique de Louvain, Avenue Georges Lemaitre 4, bte L4.05.02, 1348 Louvain-la-Neuve, Belgium, \email{Alexandre.Chemin@uclouvain.be} \and Fran{\c c}ois Henrotte \at Universit\'e catholique de Louvain, Avenue Georges Lemaitre 4, bte L4.05.02, 1348 Louvain-la-Neuve, Belgium, \email{francois.henrotte@uclouvain.be}
  \and Jean-Fran{\c c}ois Remacle \at Universit\'e catholique de
  Louvain, Avenue Georges  Lemaitre 4, bte L4.05.02, 1348
  Louvain-la-Neuve, Belgium, \email{jean-francois.remacle@uclouvain.be}
  \and Jean Van Schaftingen \at Universit\'e catholique de
  Louvain, Chemin du Cyclotron 2 bte L7.01.02 - 1348 Louvain-la-Neuve - Belgique \email{jean.vanschaftingen@uclouvain.be}}

\authorrunning{A. Chemin et al.} 
\author{Alexandre Chemin, Fran{\c c}ois Henrotte, Jean-Fran{\c c}ois
  Remacle and Jean Van Schaftingen}

\maketitle
\abstract{
  This paper presents a new way of describing cross fields based on
  fourth order tensors. We prove that the new formulation is forming a
  linear space in $\reals^9$. The algebraic structure of the tensors and
  their projections on $\mbox{SO}(3)$ are presented. The relationship of
  the new formulation with spherical harmonics is exposed. This paper is
  quite theoretical. Due to pages limitation, few practical aspects
  related to the computations of cross fields are exposed. Nevetheless,
  a global smoothing algorithm is briefly presented and computation of
  cross fields are finally depicted. }








\section{Introduction}

We call a cross $f$
a set of  $6$ distinct unit vectors
mutually orthogonal or opposite to each other (Fig.~\ref{fig::crosses}). 
This geometric object of vectorial nature
lives in the tangent space of Euclidean spaces $E^3$. 
A cross field $F = \{x \in \Omega \subset E^3 \mapsto f(x)\}$, now, 
is a rule that associates a cross $f(x)$ to each point of a subset $\Omega$ of $E^3$.
Cross fields are auxiliary in 3D mesh generation 
to define local preferred orientations for hexahedral meshes,
or for the computation of the polycube decomposition of a solid. 
Automatic polycube decomposition is a necessary step for
multiblock or isogeometric meshing of 3D domains.

Let the Euclidean space $E^3$ be equipped 
with a Cartesian coordinate system $\{x_1, x_2, x_3\}$.
The six vectors $(\pm 1, 0,0)$, $(0, \pm 1, 0)$ and $(0, 0, \pm 1)$
form a cross, which we call the reference cross $f_{ref}$.
Crosses being rigid objects, their orientation in space
can be \emph{identified} by a rotation respective to $f_{ref}$,
that is a member of $\mbox{SO}(3)$
represented by, e.g.,  the Euler angles $\alpha$, $\beta$ and $\gamma$, 
(Fig.~\ref{fig::crosses}). 
This \emph{representation} of $f$ is however not unique
due to the symmetries of the cross,
which are fully characterized by regarding the cross
as set of six points at the summits of an octahedron.
The symmetry group of this point set has 24 elements,
which are the 24 rotations that apply the cross onto itself,
and is called the octohedral point group $\mbox{O}$.
We call attitude of the cross $f$
its orientation in space up to the symmetries of the cross,
and we have 
$$
f \in \mbox{SO}(3) / \mbox{O}.
$$

\begin{figure}[!ht]
  \begin{center}
    \includegraphics[width=0.3\textwidth]{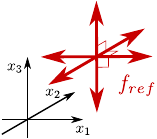}
    \includegraphics[width=0.3\textwidth]{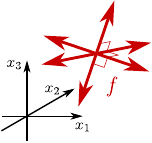}
  \end{center}
  \caption{3D crosses representation. Left image shows the reference
    cross $f_{ref}$ and right image shows a cross $f$ that is a
    rotation of the reference cross.}
  \label{fig::crosses}
\end{figure}

The expansion of a discretized field $F$ into coefficients and shape functions
in finite element analysis is by definition a linear combination,
leading then by orthogonalization in this linear space 
to a linear system of equations to solve.
It is hence necessary in this finite element context to have 
a representation of the point value of the discretized field $F(x)$
in a linear space,
i.e., a space containing the linear combinations
(here with real coefficients) of all its members. 
This is however in general not the case 
with fields taking their values in non-trivial group manifolds
like, e.g., $\mbox{SO}(3) / \mbox{O}$.

The sketch of the solution to this problem 
can be illustrated with a simple 2D example. 
Consider the unit circle $S^1$ 
and two points $e^{\imath \theta_1}$ and $e^{\imath \theta_2}$ on this manifold.
Clearly, linear combinations 
$$
a  \,  e^{\imath \theta_1} + b  \, e^{\imath \theta_2}
\quad , \quad 
a,b \in \reals
$$
do not all belong to $S^1$.
In order to have a practical representation of the elements of $S^1$
in a linear space amenable to finite element analysis,
one has to expand $S^1$ to the enclosing complex plane, 
$\ensuremath{\mathds{C}} \supset S^1$,
which is a linear space.
The finite element problem can so be formulated in terms of complex valued unknowns
that are afterwards projected back into $S^1$
by means of a projection operator, e.g., 
$$
\Pi\,: \ensuremath{\mathds{C}} \mapsto S^1
\quad , \quad 
x+ \imath y \mapsto e^{\imath\,\mbox{atan2}(y,x)}.
$$

A similar approach is followed in this paper 
for the 3D finite element smoothing of cross attitudes
belonging to the group manifold $\mbox{SO}(3) / \mbox{O}$.
The approach rely on a new way of representing 3D cross fields
as a particular class of $4^{th}$ order tensors,
themselves in close relations to $4^{th}$ degree homogeneous polynomials
of the Cartesian coordinates. 
3D cross field representations based on tensors have been used 
for 3D solid texturing and hex-dominant meshing
\cite{takayama2008lapped,zhang2011sketch,vyas2009tensor,yamakawa2003fully},
but none of them was adressing symmetry issues or projections. 
The use of $4^{th}$ order tensors allows to build 
a $9$-dimensional linear space $\tensspace$, 
containing $\mbox{SO}(3) / \mbox{O}$ as a subset,
together  with a projection operator 
$$
\Pi\,: \tensspace  \mapsto \mbox{SO}(3) / \mbox{O}.
$$ 
The approach leads eventually to a very efficient smoother for cross fields,
one order of magnitude faster than state-of-the art implementations. 
The proposed representation also allows easy computation
of the distance between a finite element computed cross $f$,
and its projection back into $\mbox{SO}(3) / \mbox{O}$.
This distance indicates the presence of singular lines and singular points 
in the cross field in a straightforward fashion.  

The paper is organized as follows.
The $4^{th}$ order tensor representation for crosses
is first inroduced,
and the useful mathematical properties of this tensor space are then derived. 
The projection method is then presented 
and results obtained with a naive 3D
crossfield smoothing on some benchmarks problem
are finally discussed.

\section{Cross representation with $4^{th}$ order tensors}

\subsection{The reference cross $f_{ref}$}

Point groups, like $O_h$, are isometries leaving 
at least one point of space, the center, invariant. 
As such, they have very convenient and useful representations on the sphere,
and hence also in terms of spherical harmonics.
In \cite{huang2011boundary,ray2016practical}, 
spherical harmonics of degree $4$
are proposed as a polynomial basis to represent 3D cross fields.
They exhibit the required octahedral symmetry
and span a linear polynomial space $\shspace$ of dimension $9$.
The projection operator 
$$
\Pi\,: \shspace \mapsto \mbox{SO}(3) / \mbox{O},
$$
however, is tedious as it relies on a complex minimization process 
that is not ensured to converge to the true projection.
Moreover, the differential properties of spherical harmonics
(they are solution of the laplacian operator)
are of no use to the purpose of cross representation. 

\begin{figure}[H]
  \begin{center}
    \includegraphics[width=.4\linewidth]{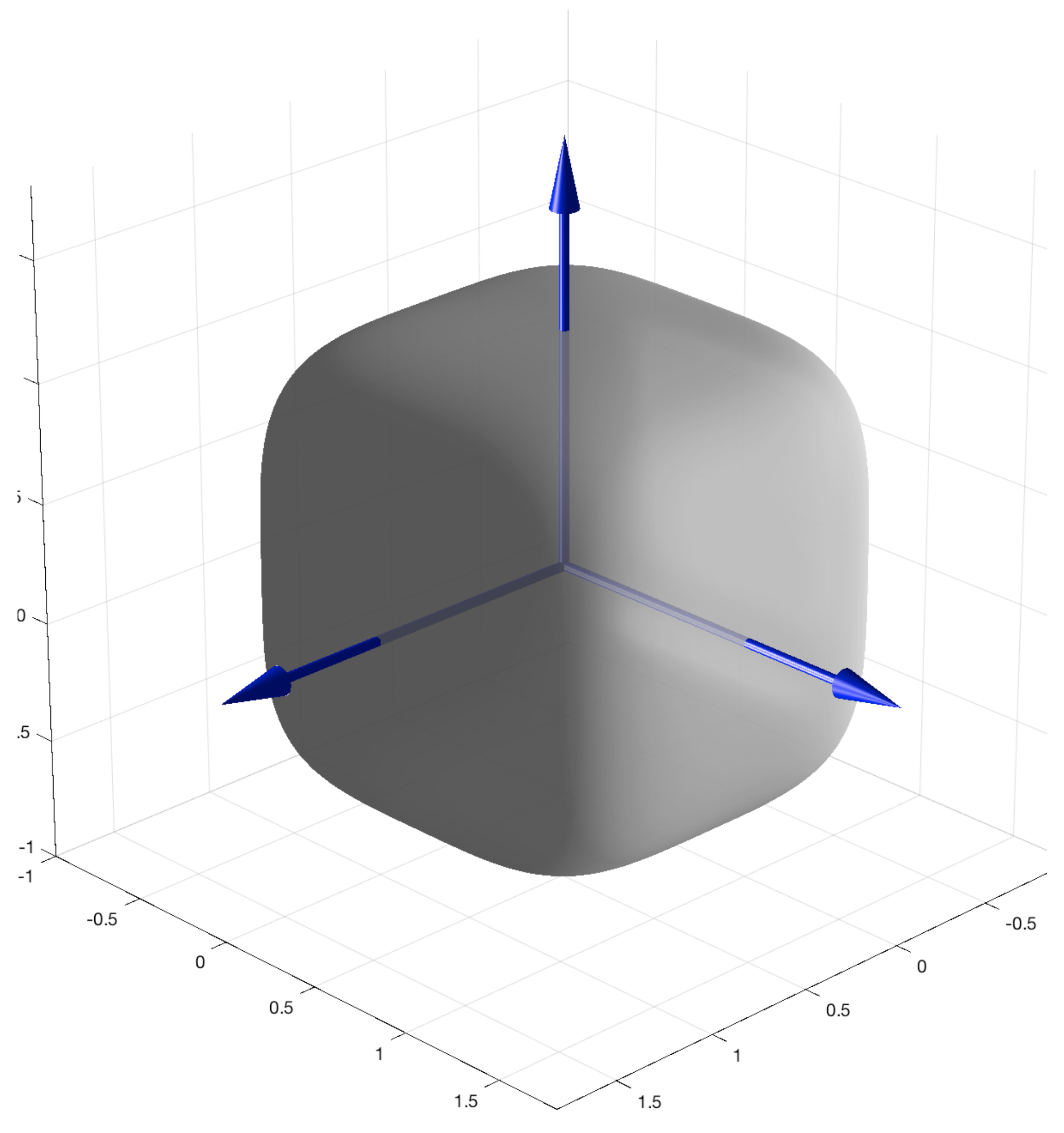}
    \includegraphics[width=.4\linewidth]{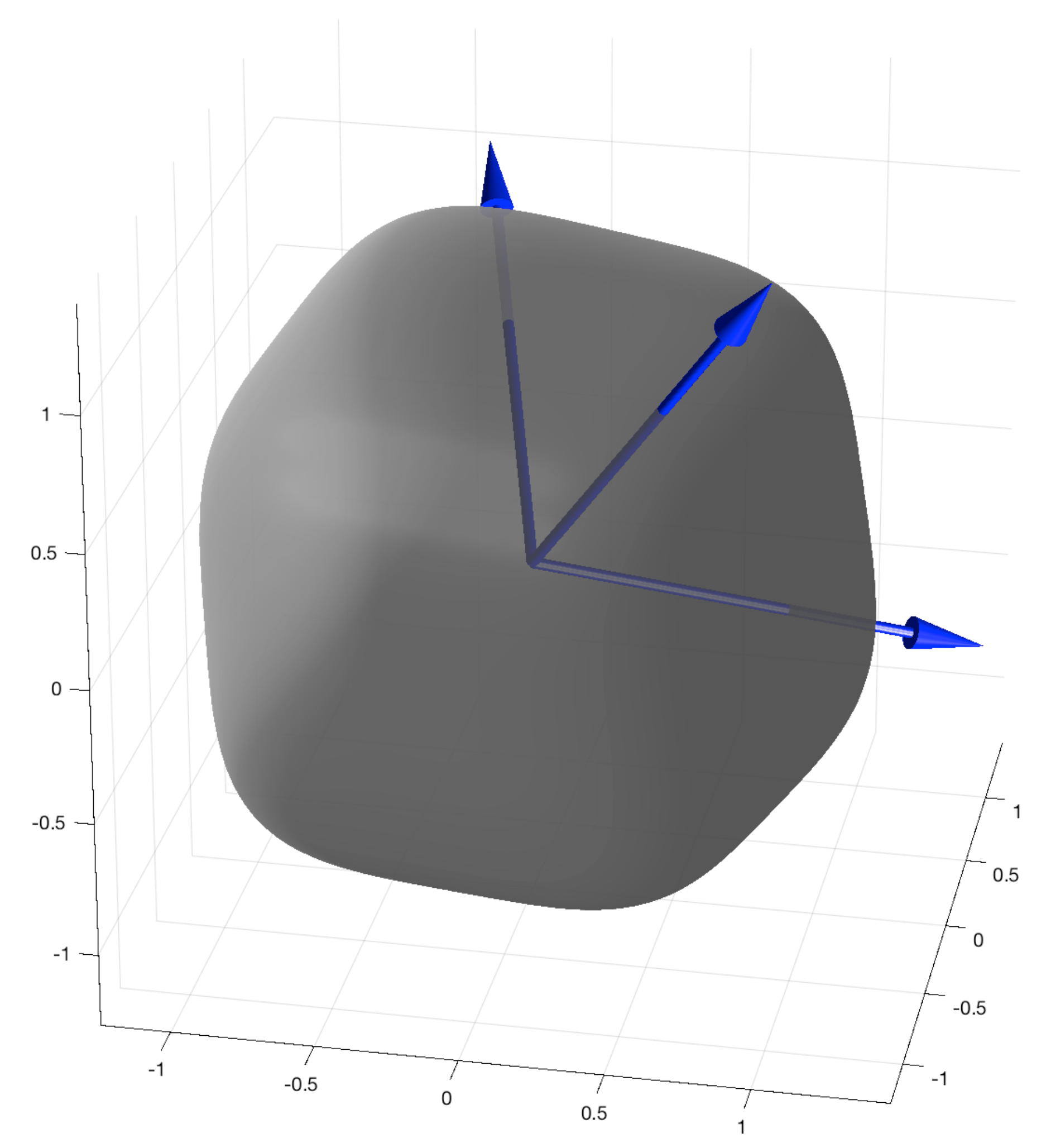}
  \end{center}
  \caption{Representation of the reference cross 
    as a dice-shaped polynomial isovalue surface (left),
    and of a general cross attitude as a rotation of the latter (right)
    \label{fig:1}}
\end{figure}

The idea promoted in this paper is thus also to work 
with polynomials 
whose isovalues exhibit the sought octahedral symmetry
but, instead of expanding them in a spherical harmonics basis,
they are represented as explicit rotations 
of a reference polynomial
\begin{equation}
  f_{ref}(x_1, x_2, x_3) = \|{x}\|_4^4 \equiv x_1^4 + x_2^4 + x_3^4,
  \label{reference_attitude}
\end{equation}
whose isovalue $f_{ref}=1$ is the dice-shaped surface  
depicted in Fig.~\ref{fig:1} (left).
Fourth order is the lowest polynomial order 
exhibiting distinctive octahedral symmetry,
which is by the way rather natural in a Cartesian coordinate system,
as it simply amounts to the invariance
against any argument inversion and/or permutation\,:
$$
f_{ref}(x_1,x_2,x_3) = f_{ref}(-x_1,x_2,x_3) = f_{ref}(x_2,-x_1,x_3) = \dots
$$

In tensor notations, we have
$$
f_{ref}(x_1, x_2, x_3) = \tilde{A}_{ijkl} \, x_i x_j x_k x_l
$$
assuming Einstein's implicit summation over repeated indices.
As a polynomial is characterized by its coefficients,
not by the power terms which act as a basis,
the $4^{th}$ order tensor
\begin{equation}
  \tilde{A}_{ijkl} = \sum_{q=1}^3 \delta_{iq} \delta_{jq} \delta_{kq} \delta_{lq}.
  \label{eq:1}
\end{equation}
is another full-fledged representation of the reference cross $f_{ref}$.
It has only three non-zero components
$$
\tilde{\At}_{1111} = \tilde{\At}_{2222} = \tilde{\At}_{3333} = 1.
$$

\subsection{Rotation of the reference cross}

The reference cross \eqref{reference_attitude} exhibits octahedral symmetry  
and rotations, which are isometries, preserve this symmetry.
It can therefore be stated that the space of all possible cross attitudes in $E^3$
is the set 
\begin{equation}
  f(x_1,x_2,x_3) = f_{ref}(R_{1i}x_i , R_{2j}x_j, R_{3k} x_k)
  \quad , \quad
  R_{ij} \in \mbox{SO}(3),
  \label{attitudePolynomial}
\end{equation}
whose corresponding tensor representation reads
\begin{equation}
  {\At}_{ijkl} = R_{im} R_{jn} R_{ko} R_{lp}\,\tilde{A}_{mnop} 
  = \sum_{q=1}^3  R_{im} R_{jn} R_{ko} R_{lp}\,\delta_{iq} \delta_{jq} \delta_{kq} \delta_{lq} 
  = \sum_{q=1}^3  R_{iq} R_{jq} R_{kq} R_{lq}.
  \label{eq:A}
\end{equation}
This tensor, noted $\At$, represents a general \emph{attitude} of the cross in $E^3$,
and the isovalue of the associated polynomial
$$
f(x_1,x_2,x_3) = {\At}_{ijkl} \,x_i x_j x_k x_l = 1
$$
is a rotation of the axis-aligned dice-shaped surface $f_{ref}$, Fig.~\ref{fig:1}.

Rotation matrices play a pivotal role in these definitions.
For convenience, let us define the following nomenclature\,:
\begin{itemize}
\item The indices $1, 2, 3$ refer to the angles $\alpha$, $\beta$ and $\gamma$, i.e. the angles corresponding to the first, second and third elemental rotations, respectively.
\item The matrices $X, Y, Z$ represent the elemental rotations
  about the axes $x_1, x_2, x_3$ of the Cartesian reference cross in $\Rset^3$  
  (e.g., $Y_1$ represents a rotation about $x_2$ by an angle $\alpha$).
\item The shorthands $s$ and $c$ represent sine and cosine 
  (e.g., $s_1$ represents the sine of $\alpha$).
\end{itemize}
We have for example the rotation matrix in $\Rset^3$ 
\begin{equation}
  R = Z_1 X_2 Z_3 = 
  \begin{pmatrix*}[r]
    c_1 c_3 - c_2 s_1 s_3 ~~      &  -c_1 s_3-c_3c_2s_1 ~~                   & s_2 s_1 \\
    c_1 c_2s_3 + c_3s_1 ~~&    c_1c_2c_3 -s_1 s_3~~  & -c_1 s_2 \\
    s_3 s_2~~ &    c_3s_2 ~~  & c_2
  \end{pmatrix*},
  \label{eq:R}
\end{equation}
which non-linearly depends on only three degrees of freedom\,:
the angles $\alpha$, $\beta$ and $\gamma$.

\subsection{Algebraic structure of ${\At}_{ijkl}$}

A $4$th order tensor in $E^3$ has at most $3^4 = 81$ independant components.
The specific algebraic structure of  \eqref{eq:A} makes it so, however,
that the tensor space of interest for cross fields is much smaller than that,
and can be characterized as a linear space $\tensspace$ of dimension $9$,
convenient for finite element interpolation,
together with a non-linear projection operator 
\begin{equation}
  \Pi\,: \tensspace  \mapsto \mbox{SO}(3) / \mbox{O}
\end{equation}
from the 9-dimensional linear space 
onto a 3-dimensional nonlinear manifold. 

The demonstration of this algebraic structure is in several steps.
First, the number of independent components of $\At$ 
cannot be larger than the dimension 
of the space of homogeneous polynomials of order $4$ 
with $3$ variables, i.e., ${{4+3-1} \choose {4}} = 15$. 
This is a mere consequence of the fact that the products of coordinates,
as the product of real numbers, obviously commute,
$x_i x_j = x_j x_i$,
and that all terms associated with components of $\At$
whose indice sets are permutations of each other
eventually contribute to the same term in the polynomial. 
In mathematical terms, it amounts to require 
the tensor $\At$ be \emph{fully symmetric},
a condition usually written 
$$
\At_{ijkl} = \At_{(ijkl)}  
$$
with 
$$
\At_{(ijkl)}  = \frac{1}{24} \left( 
\At_{ijkl}  + \At_{jikl}  + \dots
\right)
$$
where the 24 permutations of the set $ijkl$
are enumerated at the right-hand side. 

The tensors \eqref{eq:A} have however deeper structures,
related with the unitary property 
$$
R^t R = I
\quad , \quad
R_{ik} R_{jk} = \delta_{ij}
$$
of  rotations matrices.
The so called ``partial traces''\footnote{Make sure to clearly distinguish
  sums over two repeated indices, which are implict in our notation,
  and sums over four repeated indices, which are explicitly written.} 
\begin{equation}
  {\At}_{iikl} = \sum_{q=1}^3 R_{iq}R_{iq}R_{kq}R_{lq} 
  = \sum_{q=1}^3\delta_{qq}R_{kq}R_{lq} 
  = R_{kq}R_{lq} = \delta_{kl}
  \label{eq:traces}
\end{equation}
leads to the $6$ additional relationships
\begin{eqnarray}
  {\At}_{1111} + {\At}_{2211} +  {\At}_{3311} &=& 1, \nonumber \\
  {\At}_{1122} + {\At}_{2222} +  {\At}_{3322} &=& 1, \nonumber \\
  {\At}_{1133} + {\At}_{2233} +  {\At}_{3333} &=& 1, \nonumber\\
  {\At}_{1112} + {\At}_{2212} +  {\At}_{3312} &=& 0, \nonumber \\
  {\At}_{1113} + {\At}_{2213} +  {\At}_{3313} &=& 0, \nonumber \\
  {\At}_{1123} + {\At}_{2223} +  {\At}_{3323} &=& 0.  \nonumber
\end{eqnarray}
Other partial traces would give linearly dependent relationships,
due to the full symmetry of the tensor mentioned above. 

It is important to note that partial traces
are conserved under 
affine combination of tensors. 
Tensors in $\At$ form thus a $15-6=9$ dimensional linear space,
noted $\tensspace$,  convenient for finite element interpolation. 
Interestingly enough, this dimension is also that 
of the the space of $4$th order spherical harmonics, 
used by some authors to represent crosses \cite{ray2016practical}. 

\subsection{$\At$ is a projector}

Let $X$ be the set of symmetric $2^d$ order tensors in $E^3$.
This is a linear space,
and any tensor $\vd \in X$ can be expanded in terms of 
rank one basis tensors
$$
\vd = d_{mn} \ e_m \otimes e_n  
\quad , \quad 
d_{mn} = d_{nm}
$$
where $e_m$,  $m,1,2,3$, are the orthormal basis vectors of the Cartesian coordinate system. 
Alternatively, basis tensors rotated by a matrix $R \in SO(3)$ can be used as well,
\begin{equation}
  \vd = d'_{mn} \ r_m \otimes r_n
  \quad , \quad 
  d'_{mn} = d'_{nm}
  \label{eq:rankoneexp}
\end{equation}
with now
$$
r_m = R(e_m) 
\quad , \quad 
(r_m)_i = R_{ij} \delta_{jm} = R_{im}
$$
the $m^{th}$ column vector of the rotation matrix $R$. 
The polynomials we are using in this paper to represent cross attitudes
are built from special tensors in $X$
for which one simply has $B'_{mn} = x_m x_n$.

Once the space $X$ is appropriately characterized, 
the tensor $\At$ defined by \eqref{eq:A}
can be regarded as a linear application
$$
\At\,: X \mapsto X,
$$
and it is readily shown that it is a projection operator\,:
\begin{eqnarray}
  \label{eq:proj}
  \At^2 = \At:\At&\equiv& {\At}_{ijmn}{\At}_{mnkl} \nonumber \\
  &=& \sum_{q=1}^3 \sum_{s=1}^3 R_{iq}R_{jq}R_{mq}R_{nq}R_{ns}R_{ms}R_{ks}R_{ls} \nonumber \\
  &=& \sum_{q=1}^3 \sum_{s=1}^3 R_{iq}R_{jq}
  R_{ks}R_{ls}
  \underbrace{R_{mq} R_{ms}}_{\delta_{qs}}
  \underbrace{R_{nq} R_{ns}}_{\delta_{qs}} \nonumber \\
  &=&  \sum_{q=1}^3 R_{iq}R_{jq} R_{kq}R_{lq} =  {\At}_{ijkl} = \At,
\end{eqnarray}

To characterize this projection,
the image by  $\At$ of the basis tensors $\ r_m \otimes r_n$ in \eqref{eq:rankoneexp}
is evaluated. One has
\begin{eqnarray}
  \left(\At \,: \, r_m \otimes r_n\right) \big|_{ij} 
  &=& {\At}_{ijkl} \ R_{km} R_{ln} \nonumber \\
  &=& \sum_{q=1}^3 R_{iq} R_{jq} R_{kq} R_{lq} R_{km} R_{ln} \nonumber \\
  &=& \sum_{q=1}^3 R_{iq} R_{jq} \delta_{qm} \delta_{qn},  \nonumber
\end{eqnarray}
from where follows
\begin{eqnarray}
  \At \,: ( r_m \otimes r_m) &=  r_m \otimes r_m &m=1,2,3 \mbox{ (no sum)}\\
  \At \,: ( r_m \otimes r_n) &=  0 &\mbox{ if }  m \ne n.
\end{eqnarray}
As expected for a projector, 
eigen values are either 0 or 1. 
The eigenspace corresponding to
the eigenvalues $\lambda_1 = \lambda_2 = \lambda_3 = 1$ is 
$$
\mbox{range\,}\At =  \mbox{span\,}(r_1 \otimes r_1, r_2 \otimes r_2, r_3 \otimes r_3) \subset X
$$
whereas that corresponding to $\lambda_4 = \lambda_5 = \lambda_6 = 0$ is
the kernel space
$$
\ker\At =
\mbox{span\,} \left(
r_1 \otimes r_2 + r_2 \otimes r_1, 
r_2 \otimes r_3 + r_3 \otimes r_2, 
r_3 \otimes r_1 + r_1 \otimes r_3 \right).
$$
Obviously, the \emph{eigentensors} $\vd^j$ are both
symmetric ($\vd^j_{mn} = \vd^j_{nm}$),
and orthonormal to each other ($\vd^i : \vd^j = \delta_{ij}$)
under the Frobenius norm $\|\ \vd\ \|^2_F = \vd : \vd  = d_{mn} d_{nm}$.

\section{Three main results}

The three main results of the paper are now presented in this section. 
We first prove that any tensor $\At \in \tensspace$
(a fully symmetric tensor obeying \eqref{eq:traces})
corresponds to a cross attitude (i.e., a rotation of the reference cross)
if it is a projector $\At^2 = \At$ onto a three-dimensional subspace of $X$.
Then, we show how to project a tensor $\At \in \tensspace$ 
that is not a projector 
onto another tensor in $\tensspace$ verifying $\At^2 = \At$
with three non-zero eigenvalues.
Finally, we show the direct relashionship between spherical harmonics
and our representation on terms of $4^{th}$ order tensors.

\subsection{Sufficiency}

\begin{theorem}
  \label{theorem::sufficiency2}
  A tensor $\At \in \tensspace$,
  (fully symmetric $4^{th}$ order tensor 
  obeying the partial trace condition\eqref{eq:traces})
  that is also a projector on a 3-dimensional subspace of $X$
  corresponds to a cross attitude (i.e., to a rotation of the reference cross)
\end{theorem}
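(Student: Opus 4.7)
The plan is to extract, from the three hypotheses on $\At$ (full symmetry, the partial trace condition, and being a rank-three idempotent on $X$), enough algebraic structure on its range to realise $\At$ as a rotation of the reference cross. First, full symmetry gives in particular $\At_{ijkl}=\At_{klij}$, so $\At$ acts on $X$ as a Frobenius-symmetric linear operator; combined with idempotence of rank three, the spectral theorem supplies an orthonormal basis $\{\vd^1,\vd^2,\vd^3\}$ of $V:=\mathrm{range}(\At)$ such that
$$\At_{ijkl}=\sum_{q=1}^{3}\vd^q_{ij}\vd^q_{kl}.$$
The partial trace condition $\At_{iikl}=\delta_{kl}$ then becomes $\sum_q(\mathrm{tr}\,\vd^q)\,\vd^q=I$, which forces $I\in V$; I may therefore take $\vd^1=I/\sqrt{3}$ while $B:=\vd^2$ and $C:=\vd^3$ are orthonormal and traceless.

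Next I will exploit the non-trivial content of full symmetry, $\At_{ijkl}=\At_{ikjl}$. Contracting it with an arbitrary symmetric $M$ yields the operator identity
$$\sum_{q=1}^{3}\vd^q\,M\,\vd^q \;=\; \At(M).$$
Specialising to $M=I$ gives $\sum_q(\vd^q)^2=I$, whence $B^2+C^2=\tfrac{2}{3}I$. Specialising to $M=B$ with $\At(B)=B$, and using the Cayley--Hamilton identity for a traceless $3\times 3$ matrix with $\mathrm{tr}(B^2)=1$, namely $B^3=\tfrac{1}{2}B+\det(B)\,I$, produces the cubic relation $CBC=\tfrac{1}{6}B-\det(B)\,I$; the swap $B\leftrightarrow C$ yields $BCB=\tfrac{1}{6}C-\det(C)\,I$ in the same way.

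The technical heart of the proof --- the step I expect to be the only non-routine one --- is to extract commutativity $[B,C]=0$ from these relations. The trick I have in mind is to compute $C^2BC$ in two different ways: directly as $C\cdot(CBC)$ using the cubic relation, and as $(\tfrac{2}{3}I-B^2)BC$ by substituting $C^2=\tfrac{2}{3}I-B^2$ and then applying Cayley--Hamilton to the $B^3C$ that appears. The $-\det(B)\,C$ contributions cancel between the two expressions and what remains rearranges to $\tfrac{1}{6}(CB-BC)=0$.

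With commutativity in hand the remainder is routine. The commuting symmetric family $\{I,B,C\}$ admits a common orthonormal eigenbasis $\{r_1,r_2,r_3\}$ of $\Rset^{3}$, and $V$ then coincides with the three-dimensional space of symmetric matrices diagonal in that basis, namely $\mathrm{span}(r_1\otimes r_1,\,r_2\otimes r_2,\,r_3\otimes r_3)$, the three rank-one projectors forming another Frobenius-orthonormal basis of $V$. Rewriting the projector $\At$ in this new basis gives $\At_{ijkl}=\sum_{q=1}^{3}R_{iq}R_{jq}R_{kq}R_{lq}$, where $R\in\mathrm{O}(3)$ has the $r_q$ as columns; flipping the sign of one $r_q$ if necessary (which leaves each $r_q\otimes r_q$ and hence $\At$ unchanged) places $R$ in $\mathrm{SO}(3)$, and $\At$ is thereby exhibited as the attitude tensor of a rotation of the reference cross.
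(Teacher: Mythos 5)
Your proof is correct, and its overall skeleton matches the paper's: decompose the rank-three projector as $\At=\sum_q \vd^q\otimes\vd^q$ with $\{\vd^q\}$ Frobenius-orthonormal symmetric matrices, show these commute, jointly diagonalize, and read off $\At=\sum_q r_q\otimes r_q\otimes r_q\otimes r_q$. The intermediate identities are also the same ones the paper uses — your operator identity $\sum_q\vd^qM\vd^q=\At(M)$ for symmetric $M$, specialised to $M=I$ and to an eigentensor, is exactly the paper's pair of equations obtained from $I$ and $\vd^a$ being eigentensors combined with full symmetry. Where you genuinely diverge is in how commutativity is extracted. The paper subtracts its two identities, takes the trace, and lands on $0=\|[\vd^a,\vd^b]\|_F^2+\|[\vd^a,\vd^c]\|_F^2$, concluding by positivity of each summand. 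You instead normalise the basis so that $\vd^1=I/\sqrt{3}$ (legitimate, since the partial trace condition puts $I$ in the range and the projector is basis-independent on its range), reduce to two traceless unit tensors $B,C$ with $B^2+C^2=\tfrac{2}{3}I$ and $CBC=\tfrac16 B-\det(B)I$, and then obtain $[B,C]=0$ exactly by evaluating $C^2BC$ two ways with Cayley--Hamilton; I checked the cancellation of the $\det(B)\,C$ terms and the residual $\tfrac16(CB-BC)=0$, and it is sound. Your route buys an identity-level (rather than trace-and-positivity) proof of commutativity and a slightly cleaner bookkeeping by privileging $I$ in the basis; the paper's route avoids Cayley--Hamilton and the normalisation, at the cost of only proving the commutators vanish in norm (which of course suffices). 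Both are complete; the final joint-diagonalization and sign-flip to land in $\mbox{SO}(3)$ are handled equivalently, and your observation that $V$ must equal the full space of matrices diagonal in the common eigenbasis (by a dimension count) is the same fact the paper establishes by exhibiting each $r_q\otimes r_q$ as an explicit linear combination of the $\vd^q$.
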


\begin{proof}
  If $\At$ is a projector onto a 3-dimensional subspace of $X$,
  there exist three orthonormal symmetric second order tensors 
  $\vd^a,\vd^b,\vd^c \in X$ such that
  \begin{equation}
    \begin{array}{rcll}
      \vd^a\otimes \vd^a + \vd^b\otimes \vd^b + \vd^c\otimes \vd^c &=& \At & \\
      \vd^l : \vd^m &=& \delta_{lm}&l,m=a,b,c\\
      \vd^l_{ij} &=& \vd^l_{ji}&l=a,b,c,~ i,j=1,2,3.\\
      \label{eq:propD}
    \end{array}
  \end{equation}
  Note that there is no implicit summation on upper indices in this proof. 
  The key point of the proof is to show that 
  the eigentensors $\vd^a$,  $\vd^b$ and $\vd^c$ commute with each other.
  If this is the case, they are joint diagonalizable 
  and share therefore the same set of eigenvectors.
  It is then easy to see that $\At$ is the $4^{th}$ order tensor representation of a cross. 

  Let
  $$
  {\ctr lm} = \vd^l\cdot\vd^m-\vd^m\cdot\vd^l
  $$
  be the the commutator of $\vd^l$ and $\vd^m$,
  of which we have to prove  the Frobenius norm is zero,
  $$
  \|\ {\ctr lm}\ \|^2_F = {\ctr lm}:{\ctr lm} =0.
  $$

  One first notes that
  \begin{eqnarray}
    (\vd^a \cdot \vd^b) : (\vd^e \cdot \vd^f) 
    &=& d^a_{ik} d^b_{kj} d^e_{il} d^f_{lj} = \mbox{tr\,}(\vd^a\vd^b\vd^f\vd^e) \nonumber\\
    &=& (\vd^b \cdot \vd^a) : (\vd^f \cdot \vd^e) \nonumber\\ 
    &=& (\vd^a \cdot \vd^e) : (\vd^b \cdot \vd^f) \nonumber\\
    &=& (\vd^e \cdot \vd^a) : (\vd^f \cdot \vd^b) \label{eq:reorganizations}
  \end{eqnarray}
  exploiting the symmetry of the individual $\vd^l$ tensors, 
  and all possible reorganizations of the matrix products.
  As the contraction operator $:$ is also symmmetric,
  there are thus 8 equivalent argument permutations (out of 24) 
  for such scalar quantities. 
  With this, one shows that
  \begin{eqnarray}
    \|\ [\vd^a, \vd^b] \ \|^2_F 
    &=& ( \vd^a \cdot \vd^b - \vd^b \cdot \vd^a) 
    \,: ( \vd^a \cdot \vd^b - \vd^b \cdot \vd^a) \nonumber\\
    &=& (\vd^a \cdot \vd^b) \,: (\vd^a \cdot \vd^b)  
    - (\vd^a \cdot \vd^b) \,: (\vd^b \cdot \vd^a) -\nonumber\\
    &&      (\vd^b \cdot \vd^a) \,: (\vd^a \cdot \vd^b)
    + (\vd^b \cdot \vd^a) \,: (\vd^b \cdot \vd^a)  \nonumber\\
    &=& 2 (\vd^a \cdot \vd^a) \,: (\vd^b \cdot \vd^b) 
    -  2 (\vd^a \cdot \vd^b) \,: (\vd^b \cdot \vd^a) \label{eq:commutator}
  \end{eqnarray}
  the last two terms being not reducible to each other 
  by the permutation rules given above. 
  As
  $$
  (\vd^a \cdot \vd^b) : (\vd^e \cdot \vd^f) = \mbox{tr\,}(\vd^a\vd^b\vd^f\vd^e),
  $$
  the identity \eqref{eq:commutator} can also be interpreted as 
  \begin{equation}
    \|\ [\vd^a, \vd^b] \ \|^2_F 
    = \mbox{tr\,}(\vd^a\vd^a\vd^b\vd^b - \vd^a\vd^b\vd^a\vd^b)
    = \mbox{tr\,}\left( (\vd^a)^2 (\vd^b)^2 - (\vd^a\vd^b)^2 \right).
    \label{eq:frobeniustrace}
  \end{equation}

  The identity tensor $I$ being in the range of $\At$,
  it is an eigen tensor of $\At$, one has thus
  $$
  \delta_{ij} = A_{ijkl} \delta_{kl} = A_{iklj} \delta_{kl} 
  $$
  where the full symmetry of $\At$ has been used.
  This reads, without components,
  $$
  I = \vd^a \vd^a + \vd^b \vd^b + \vd^c \vd^c
  = (\vd^a)^2 + (\vd^b)^2 + (\vd^c)^2,
  $$
  wherefrom directly follows 
  \begin{equation}
    (\vd^a)^2 = (\vd^a)^4 + (\vd^a)^2 (\vd^b)^2 + (\vd^a)^2 (\vd^c)^2.
    \label{eq:dasquare1}
  \end{equation}

  On the other hand, using now the fact that $\vd^a$ is an eigen tensor of $\At$, 
  one has
  $$
  d^a_{ij}  = A_{ijkl} d^a_{kl} 
  = d^a_{ij} d^a_{kl} d^a_{kl} + d^b_{ij} d^b_{kl} d^a_{kl} + d^c_{ij} d^c_{kl} d^a_{kl}
  $$
  and, using again the full symmetry of $\At$
  $$
  d^a_{ij} = d^a_{ik} d^a_{lj} d^a_{kl} + d^b_{ik} d^b_{lj} d^a_{kl} + d^c_{ik} d^c_{lj} d^a_{kl}
  $$
  so that
  $$
  \vd^a = (\vd^a)^3 + \vd_b \vd_a \vd_b + \vd_c \vd_a \vd_c 
  $$
  and premultiplying with $\vd^a$
  \begin{equation}
    (\vd^a)^2 = (\vd^a)^4 + (\vd_a \vd_b)^2 + (\vd_a \vd_c)^2.
    \label{eq:dasquare2}
  \end{equation}

  Substraction of \eqref{eq:dasquare2} and \eqref{eq:dasquare1} yields
  $$
  0 = (\vd_a \vd_b)^2 + (\vd_a \vd_c)^2 - (\vd^a)^2 (\vd^b)^2 + (\vd^a)^2 (\vd^c)^2,
  $$
  the trace of which gives, using \eqref{eq:frobeniustrace},
  $$
  0 = \|\ [\vd^a, \vd^b] \ \|^2_F  + \|\ [\vd^a, \vd^c] \ \|^2_F.
  $$
  As this is a sum of positive terms, both terms are zero, 
  and we have proven that $\vd^a$ commutes with $\vd^b$ and $\vd^c$.

  As $(\vd^a,\vd^b,\vd^c)$ are symmetric and commute, there exist an othonormal basis $(r^1,r^2,r^3)\in(\mathbb{R}^3)^3$ such as:
  \begin{equation}
    \begin{array}{rcll}
      \vd^a &=& \alpha_1r^1\otimes r^1+\alpha_2r^2\otimes r^2+\alpha_3r^3\otimes r^3 & \alpha_1,\alpha_2,\alpha_3\in\mathbb{R}\\
      \vd^b &=& \beta_1r^1\otimes r^1+\beta_2r^2\otimes r^2+\beta_3r^3\otimes r^3 & \beta_1,\beta_2,\beta_3\in\mathbb{R}\\
      \vd^c &=& \gamma_1r^1\otimes r^1+\gamma_2r^2\otimes r^2+\gamma_3r^3\otimes r^3 & \gamma_1,\gamma_2,\gamma_3\in\mathbb{R}\\
    \end{array}
    \label{eq:diag1TP}
  \end{equation}

  We will now show that $r_i\otimes r_i\text{, }i\in \{1,2,3\}$ are eigentensors of $\At$.
  First, we know that $\vd^l$ are orthogonal and of norm $1$. So, $\left((\alpha_1,\alpha_2,\alpha_3),(\beta_1,\beta_2,\beta_3),(\gamma_1,\gamma_2,\gamma_3)\right)$ forms an orthonormal basis of $\mathbb{R}^3$.

  Therefore, it exists a unique vector $v\in\mathbb{R}^3$ such as :
  \begin{equation}
    \begin{pmatrix*}[r]
      \alpha_1 ~~     &  \beta_1 ~~                   & \gamma_1 \\
      \alpha_2 ~~     &  \beta_2 ~~                   & \gamma_2 \\
      \alpha_3 ~~     &  \beta_3 ~~                   & \gamma_3
    \end{pmatrix*}
    v
    =
    \begin{pmatrix*}[r]
      1  \\
      0  \\
      0 
    \end{pmatrix*}
    \label{eq:CL1TP}
  \end{equation}
  and we have
  \begin{equation}
    v_1\vd^a + v_2\vd^b + v_3\vd^c = r_1\otimes r_1
    \label{eq:CL2TP}
  \end{equation}
  As, $\vd^l$ are eigentensors of $\At$ assiociated to eigenvalue $1$,
  \begin{equation}
    \begin{array}{rcl}
      \At : (r_1\otimes r_1) &=& \At : (v_1\vd^a + v_2\vd^b + v_3\vd^c)\\
      &=& v_1\vd^a + v_2\vd^b + v_3\vd^c\\
      &=& (r_1\otimes r_1)\\
    \end{array}
    \label{eq:eige1TP}
  \end{equation}
  Consequently, $r_1\otimes r_1$ is an eigentensor of $\At$ assiociated to eigenvalue $1$. We can show in the same way that $(r_2\otimes r_2)$ and $(r_3\otimes r_3)$ are also eigentensors of $\At$ assiociated to eigenvalue $1$.

  Thus, as $\At$ is a projector with only three non zero eigenvalues, we finally have :
  \begin{equation}
    \begin{array}{rcl}
      \At &=&  r_1\otimes r_1\otimes r_1\otimes r_1 + r_2\otimes r_2\otimes r_2\otimes r_2 + r_3\otimes r_3\otimes r_3\otimes r_3\\
    \end{array}
    \label{eq:diag2TP}
  \end{equation}
  Therefore, $\At$ is the representation of the cross with orthogonal directions $(r_1,r_2,r_3)$.
\end{proof}

\subsection{Recovery}
The representation that is advocated here relies heavily on 
the computation of eigentensors of fourth order
tensors. Disappointingly, numerical tools for linear algebra are
designed to manipulate vectors and matrices. Hopefully, it is possible
to represent symmetric fourth order tensors as matrices.

A  fourth order tensor $\At$  endowed with minor symmetry conditions
${\displaystyle {\At}_{ijkl}={\At}_{jikl}={\At}_{ijlk}}$ has $36$ independant
components. It is useful to write it in the so called Mandel notation
as the following matrix $6 \times 6$ matrix:
\begin{equation}
  {A}={\begin{pmatrix}{\At}_{{1111}}&{\At}_{{1122}}&{\At}_{{1133}}&{\sqrt
        2}{\At}_{{1123}}&{\sqrt 2}{\At}_{{1113}}&{\sqrt
        2}{\At}_{{1112}}\\
      {\At}_{{2211}}&{\At}_{{2222}}&{\At}_{{2233}}&{\sqrt
        2}{\At}_{{2223}}&{\sqrt 2}{\At}_{{2213}}&{\sqrt 2}{\At}_{{2212}}\\
      {\At}_{{3311}}&{\At}_{{3322}}&{\At}_{{3333}}&{\sqrt
        2}{\At}_{{3323}}&{\sqrt 2}{\At}_{{3313}}&{\sqrt 2}{\At}_{{3312}}\\
      {\sqrt 2}{\At}_{{2311}}&{\sqrt 2}{\At}_{{2322}}&{\sqrt
        2}{\At}_{{2333}}&2{\At}_{{2323}}&2{\At}_{{2313}}&2{\At}_{{2312}}\\
      {\sqrt 2}{\At}_{{1311}}&{\sqrt 2}{\At}_{{1322}}&{\sqrt
        2}{\At}_{{1333}}&2{\At}_{{1323}}&2{\At}_{{1313}}&2{\At}_{{1312}}\\
      {\sqrt 2}{\At}_{{1211}}&{\sqrt 2}{\At}_{{1222}}&{\sqrt
        2}{\At}_{{1233}}&2{\At}_{{1223}}&2{\At}_{{1213}}&2{\At}_{{1212}}\\
  \end{pmatrix}}.
  \label{AVoigt}
\end{equation}
Major symmetry conditions ${\displaystyle {\At}_{ijkl}={\At}_{klij}}$
ensure that $A$ is symmetric.
Factors $2$ and $\sqrt{2}$ in \eqref{AVoigt} allow to write 
the cross representation as the following usual quadratic form:
\begin{equation}
  (x \otimes x)^t \ {A}  \ (x \otimes x) = 1.
  \label{xtxAxtx}
\end{equation}
with
$$
x \otimes x = 
\begin{pmatrix*}[l]
  x_1^2 ~~ &x_2^2 ~~ &x_3^2  ~~&\sqrt{2}x_2x_3 ~~& \sqrt{2}x_1x_3  ~~&\sqrt{2}x_1x_2 \\
\end{pmatrix*}^t.
$$
Let us now compute Mandel's representation of the reference cross
$\tilde{\At}$ (see \eqref{reference_attitude}):
\begin{equation}
  \tilde{A}  
  = \begin{pmatrix*}[r]
    1 &0 &0 & 0 & 0 & 0\\
    0 &1 &0 & 0 & 0 & 0\\
    0 &0 &1 & 0 & 0 & 0\\
    0 &0 &0 & 0 & 0 & 0\\
    0 &0 &0 & 0 & 0 & 0\\
    0 &0 &0 & 0 & 0 & 0
  \end{pmatrix*}
  \label{Atilde}
\end{equation}
In a previous section, we have shown that 
only $9$ scalar parameters $(a_1,\dots,a_9)$
are required to represent $\At$. Taking into account symmetries 
and partial traces, we can write
\begin{equation}
  {\tiny
    {A} = 
    \begin{pmatrix*}[r]
      a_{1} & & &  &  & \\
      \frac{1}{2} (1 + a_{3} - a_{2} - a_{1})& a_{2} & & & \mbox{SYM}  &  \\
      \frac{1}{2} (1 + a_{2} - a_{3}- a_{1})& \frac{1}{2} (1 +
      a_{1} - a_{2}- a_{3})& a_{3} & & &  \\
      -\sqrt{2} (a_{4}+a_5)& \sqrt{2} a_{4} & \sqrt{2} a_{5}&    1 + a_{1} - a_{3} - a_{2}& &  \\
      \sqrt{2} a_{6}& -\sqrt{2} (a_6+a_{7})& \sqrt{2} a_{7}& -{2}(a_{8}+a_{9})& 1 + a_{2} - a_{3} - a_{1} &  \\
      \sqrt{2} a_{8}& \sqrt{2} a_{9}& -\sqrt{2} (a_{8}+a_{9})&{-2} (a_6+a_{7}) & {-2}(a_{4}+a_5) & 1 + a_{3} - a_{2} - a_{1} \\
    \end{pmatrix*}
    \label{eq:mandel}
  }
\end{equation}
with the following correspondances between the ${\At}_{ijkl}$'s and the $a_i$'s:
$$a_1= {\At}_{1111}~, a_2 = {\At}_{2222},~a_3= {\At}_{3333},$$
$$a_4 = {\At}_{2322}~, a_5 = {\At}_{2333},~a_6 = {\At}_{1311},$$
$$a_7 = {\At}_{1333}~, a_8 = {\At}_{1211},~a_9 = {\At}_{1222}.$$

Mandel's notation allows to write tensor contractions as matrix
products. For example, $\At : \At = \At$ (see \eqref{eq:proj}) is written using Mandel's notation as
$A \cdot A = A$. 

Eigenvectors of $A$ are the eigentensors of $\At$. Their $6$
components are the $6$ independent entries of eigentensors 
$\vt^k$ that are symmetric second order tensors. The two following
MATLAB functions allow to transform fourth order tensors $\At$ into
Mandel's form and transform eigenvectors of $A$ into second order
tensors. We also see factors of $\sqrt{2}$ that accounts for the
symmetry of $A$.
{\scriptsize
\begin{verbatim}
  function D = Vec6ToTens2 (v)
  s = 2^(1./2.);
  D = [
    v(1) , v(6)/s , v(5)/s ; 
    v(6)/s , v(2) , v(4)/s ; 
    v(5)/s , v(4)/s , v(3) ; 
  ];
  end
\end{verbatim}
}
{\scriptsize
\begin{verbatim}
  function a = Tens4ToMat6 (A)
  s = 2^(1./2.);
  a = [ 
    A(1,1,1,1) , A(1,1,2,2) , A(1,1,3,3) , s*A(1,1,2,3), s*A(1,1,1,3), s*A(1,1,1,2) ;
    A(2,2,1,1) , A(2,2,2,2) , A(2,2,3,3) , s*A(2,2,2,3), s*A(2,2,1,3), s*A(2,2,1,2) ;
    A(3,3,1,1) , A(3,3,2,2) , A(3,3,3,3) , s*A(3,3,2,3), s*A(3,3,1,3), s*A(3,3,1,2) ;
    s*A(2,3,1,1) , s*A(2,3,2,2) , s*A(2,3,3,3) , 2*A(2,3,2,3), 2*A(2,3,1,3), 2*A(2,3,1,2) ;
    s*A(1,3,1,1) , s*A(1,3,2,2) , s*A(1,3,3,3) , 2*A(1,3,2,3), 2*A(1,3,1,3), 2*A(1,3,1,2) ;
    s*A(1,2,1,1) , s*A(1,2,2,2) , s*A(1,2,3,3) , 2*A(1,2,2,3), 2*A(1,2,1,3), 2*A(1,2,1,2) ;
  ];
  end
\end{verbatim}
}
Note that those MATLAB routines are made for testing and that 3D large
codes will only manipulate the $9$ nodal unknowns $a_i$.

Computing 3D cross fields implies to propagate tensors that have known
values on the boundary of a 3D domain inside the domain. Assume a
tensor $\At$ that has the right structure and that is such that $\At
: \At = \At$. With such properties, we know that $\At$ is a rotation
of $\tilde{\At}$. The first important issue is about backtracking $R$
from $\At$ i.e. find the three orthonormal column vectors $r^q$ of $R$
that  form $\At$ through Equation \eqref{eq:A}.

An eigentensor $\vt^n$ of $\At$ that is associated with eigenvalue $1$ is 
the linear combination 
$$\vt^n_{ij} = \sum_{q=1}^3 c_{nq} r^q_i r^q_j.$$
We have
$$\vt^n_{im}r^k_m = c_{nk} r^k_i$$
which means that the eigenvectors of $\vt^n$ are indeed the $r^k$'s. One
issue here could be that $\vt^n$ is not of full rank. Yet, the sum
$$\vt = \sum_{k=1}^3 \vt^k$$
is of full rank. Eigenvectors of $\vt$ are the wanted $3$
directions. 
Assume a representation $\At$ in of
the form \eqref{eq:mandel} and let us recover rotation matrix $R$. The
following MATLAB code recovers  the rotation matrix $R$ starting from
a tensor $\At$ that is a rotation.
{\scriptsize
\begin{verbatim}
  function R = Tens4ToRotation (A)
  a     = Tens4ToMat6 (A); % transform A into its matrix form 
  [V,D] = eig (a)   ; % compute eigenspace
  [X,I] = sort(diag(D)); % sort eigenvalues
  % compute the sum of eigentensors of A associated 
  % to eigenvalues equal to 1
  V2      = Vec6ToTens2 (V(:,I(4))+V(:,I(5))+V(:,I(6))); 
  [R,d2] = eig(V2); % get rotation matrix R
  end
\end{verbatim}
}
This code has been tested to thousands of random rotations, giving the
right answer in a $100 \%$ robust fashion.

The aim of our work is to build smooth cross fields in general 3D
domains. For that, we will solve a boundary value problem
for the $9$ linearly independant components $(a_1,\dots,a_9)$ of 
the tensor representation. 
Consider two representations $X$ and $Y$ with their representation
vectors $(x_1,\dots,x_9)$ and $(y_1,\dots,y_9)$
Any smoothing procedure computes (weighted) averages of such
representations. For example, representation vector
$${1 \over 2} (x_1+y_1,\dots,x_9+y_9)$$
allows to build Mandel's representation 
$Z = {1 \over 2}(X + Y)$ that as the same structure as 
matrix \eqref{eq:mandel}. 

Assume a cross attitude $A(\alpha,\beta,\gamma)$ that depends
on Euler angles $\alpha$, $\beta$ and $\gamma$. The projection of $Z$
into the space of rotations of the reference cross is 
defined as the cross attitude $A$ that verifies
$$A = \min_{\alpha,\beta,\gamma} \|A(\alpha,\beta,\gamma)-Z\|.$$ 
The following function
{\scriptsize
\begin{verbatim}
  function P = projection (A) 
  b_guess = [0 0 0];
  [b_guess(1) b_guess(2) b_guess(3)] = EulerAngles(Tens42Rotation (A));
  vA = Tensor4ToMat6 (A); 
  fun = @(x) norm(Tensor4ToMat6(makeTensor (makeEulerRotation (x(1),x(2),x(3))))-vA) ;
  b_min = fminsearch(fun, b_guess);
  P = makeTensor (makeEulerRotation (b_min(1), b_min(2),b_min(3)));
  end
\end{verbatim}
}
allows to compute such a projection. In that function, we choose as an
initial guess for Euler angles the value computed by {\tt
  Tens4ToRotation} which uses the eigenspace of $A$ relative to its
three largest eigenvalues. Figure \ref{fig:comp} shows that this
initial guess is indeed a very good approximation of the
projection. In reality, it is such a good approximation that it can be
used as is without doing the exact minimization.  
\begin{figure}[H]
  \begin{center}
    \includegraphics[width=.5\textwidth]{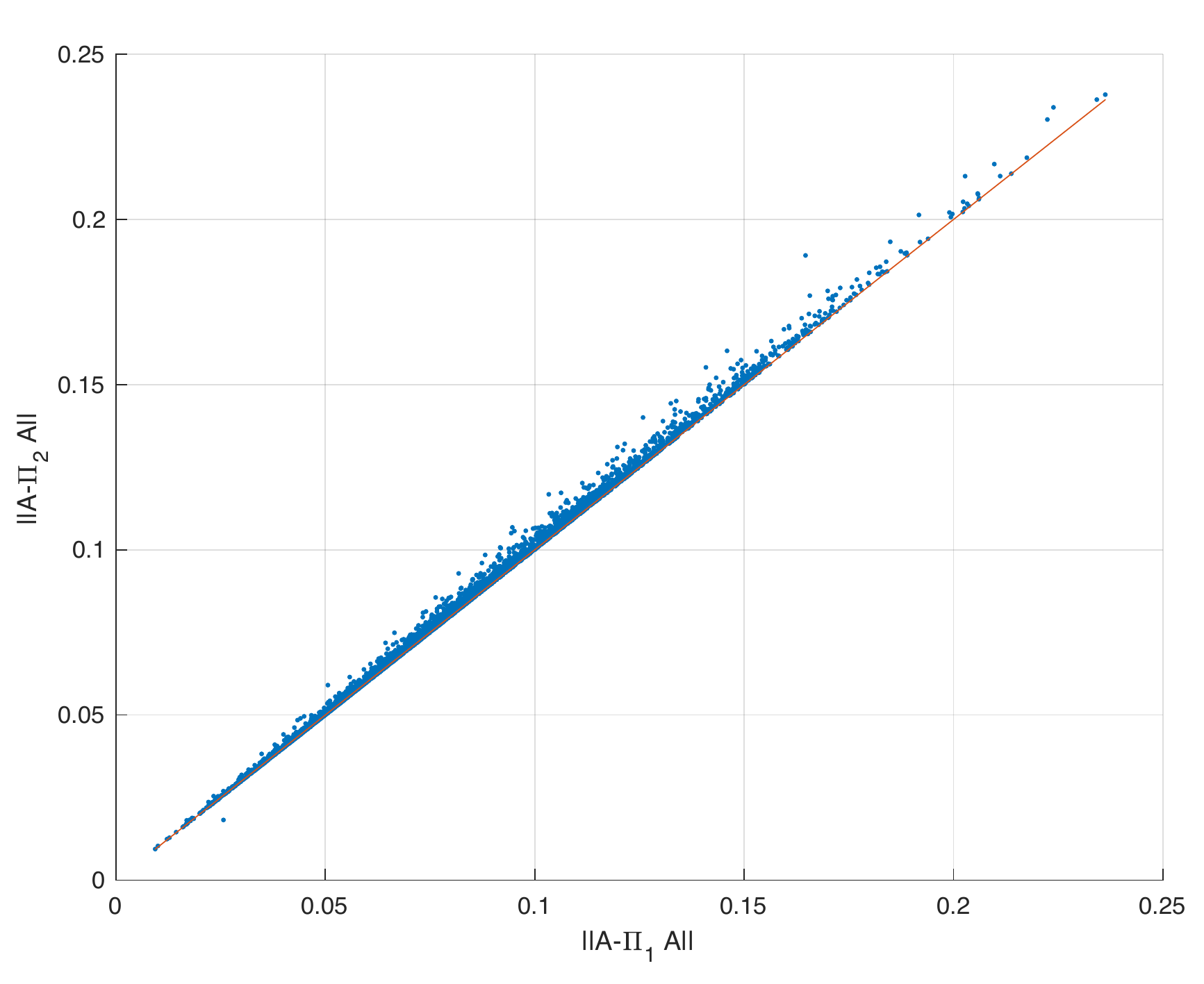}
  \end{center}
  \caption{Projection of $4000$ random tensors $Z$. $P_1$ is the true
    projection while $P_2$ is the approximation computed using function 
    {\tt Tens42Rotation}.
    \label{fig:comp}}
\end{figure}
On Figure \ref{fig:comp}, $P_1$ is the exact projection while $P_2$ is
the approximation. We see that the approximation $P_2$ is always very
good with respect to the projection, while being extremely simple and
fast to compute.

\subsection{Relation with spherical harmonics}
Harmonic polynomials $h(x)$ are polynomials that are such $\nabla^2 h
= 0$. Consider the rotated diced cube polynomial representation
$$\alpha(x) = \sum_{q=1}^3 (r^q \cdot x)^4$$
We have
$$\nabla^2 \alpha = \sum_{j=1}^3{\partial^2 \alpha \over \partial x^2_j} = 12 \sum_{q=1}^2 (r^q \cdot
x)^2 (r^q_j)^2 = 12 \sum_{q=1}^3 \left[(r^q \cdot x)^2
  \left((r^q_1)^2+(r^q_2)^2+(r^q_2)^2\right)\right] = 12 \sum_{q=1}^3 (r^q \cdot x)^2. $$
The equation $\sum_{q=1}^3 (r^q \cdot x)^2$ is the one of the unit
sphere that is invariant by rotation. Thus,
$$\nabla^2 \alpha = 12 \left| x\right|^2.$$ 
Representation polynomial $\alpha(x)$ is thus not harmonic. Yet, acknowledging that
$$\nabla^2 |x|^4  = 20 |x|^2,$$
we can define the following projection operator of diced cubes onto harmonic polynomials
$$P_{\shspace}(\alpha(x)) = \alpha(x) - {3 \over 5} |x|^4.$$
Operator $P_{\shspace}$ essentially remove three fifth of a sphere
to the diced cube so that the representation retains its symmetry
properties while becoming itself harmonic. Let us show that 
$P_{\shspace}$ is an orthogonal projector with respect to a norm that is related
to spherical harmonics. Consider the unit sphere $S^2$ and compute
$$\int_{S^2} h(x) \left[P_{\shspace}(\alpha(x)) - \alpha(x) \right] dx =
-{3 \over 5} \int_{S^2} h(x) |x|^4 dx=  -{3 \over 5}\int_{S^2} h(x) dx.$$ 
Harmonic functions are endowed with the mean value property which
states that the average of $h(x)$ over any sphere centered at $c$ is
equal to $h(c)$. So, 
$$\int_{S^2} h(x) \left[P_{\shspace}(\alpha(x)) - \alpha(x) \right] dx =
-{3 \over 5}\int_{S^2} h(0) dx.$$ 
Harmonic polynomials are homogeneous so $h(0)=0$ and operator
$P_{\shspace}$ is an orthogonal projector onto fourth order 
spherical harmonics.

As an example, consider our reference diced cube  that is represented
by $\alpha(x) = x_1^4+x_2^4+x_3^4$. Its  projection onto $\shspace$ is
$$P_{\shspace}(x_1^4+x_2^4+x_3^4) = {2\over 5} (x_1^4+x_2^4+x_3^4 - 3 (x_1^2 x_2^2+x_1^2 x_3^2+x_2^2 x_3^2))$$
This is indeed interesting to see that, for $x \in S^2$, we have
$$P_{\shspace}(x_1^4+x_2^4+x_3^4) = \sqrt{12 \pi \over 7} {16 \over 3} \left(\sqrt{7\over 12} Y_{4,0} + \sqrt{5\over 12}  Y_{4,4}\right) $$
where $Y_{4,j}$, $j=-4,\dots,4$ are the orthonormalized real spherical
harmonics. In \cite{ray2016practical}, authors define their reference frame as
$$\tilde{F} = \sqrt{7\over 12} Y_{4,0} + \sqrt{5\over 12} Y_{4,4}$$ 
which is to a constant the orthogonal projection of our reference
frame onto spherical harmonics.

\section{Practical computations}
Assume a domain $\Omega$ with its non smooth boundary $\Gamma$ that may 
contain sharp edges and corners. Our aim is to find a crossfield $F$ that is smooth, and such as for all $x\i\Gamma$, $F(x)$ has one direction aligned to the boundary normal $n(x)$. Here, a
simple smoothing procedure that consist in locally averaging cross
attitudes at every vertex of a mesh that covers $\Omega$ is
proposed. The issue of boundary conditions is not treated here.

Let ${\bf a}_i \in \reals^9$ the representation vector at vertex $i$.
The energy function that is considered is pretty standard
\begin{equation}
  E = {1 \over 2} \sum_{ij} \|{A}_i - {A}_j\|_F^2
\end{equation}
where $\sum_{ij} $ is the sum over all edges of the mesh
and $\|\cdot\|_F$ is the Frobenius norm. The energy is minimized in an
explicit fashion. Tensor representations ${\bf a}_i \in \Rset^9$ are
averaged at every vertex of the mesh and subsequently projected back
to $\mbox{SO}(3)$ in the approximate fashion developped above. The
algorith is stopped when the global energy $E$ has decreased by a factor of $10^4$.

We have generated three uniform meshes of a unit sphere with different resolutions. Results are
presented in Figure \ref{fig:sphere}. The iterations were started with
every node assigned to the reference frame aligned with the axis. 
Crosses with values of $\eta$ in the range  $\eta \in [0.3,0.5]$ are 
drawn on the Figures.
Figures show the usual polycube decomposition of the sphere with 
$12$ singular lines made of ``cylinders'' that form an internal topological cube plus 
$8$ singular lines connecting the corners of the topological cube to
the surface. Refining the mesh allows to produce more detailed
representation of the decomposition. Our method is significantly
faster than the ones using spherical hamrmonics thanks to the
efficient projection operator that only requires to compute
eigenvectors and eigenvalues of $3 \times 3$ and $6 \times 6$
symmetric matrices. 
\begin{figure}[h!]
  \begin{tabular}{ccc}
    \includegraphics[width=0.3\textwidth]{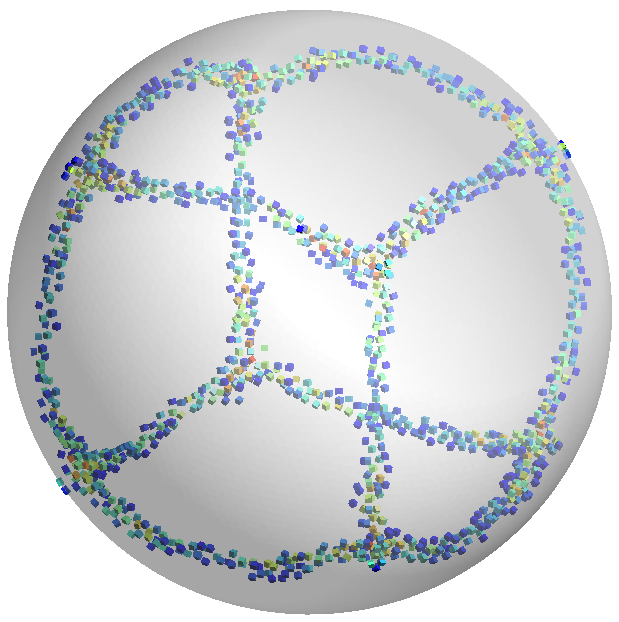}&
    \includegraphics[width=0.3\textwidth]{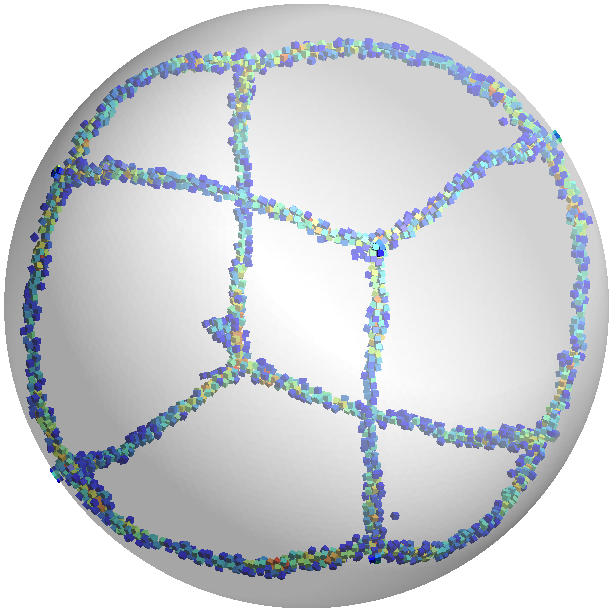}&
    \includegraphics[width=0.3\textwidth]{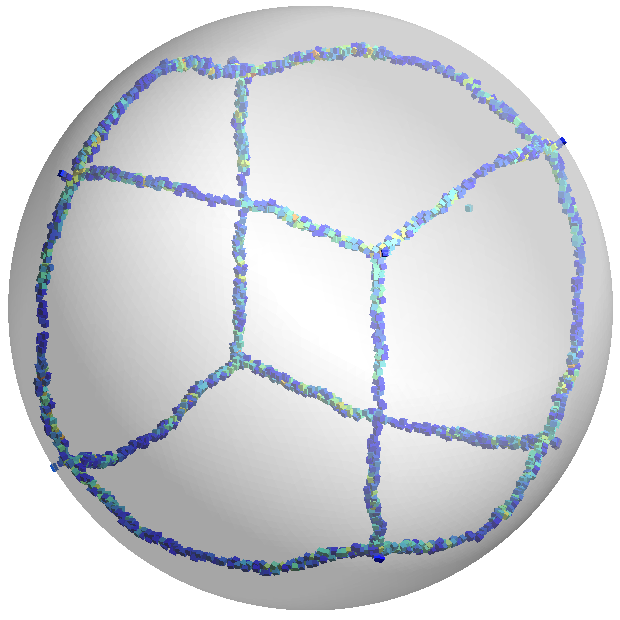}
  \end{tabular}
  \caption{Computation of cross field on a sphere meshed with
    tetrahedra. The three meshes used contain respectively 447,405,
    2,124,801 and 6,128,555 tetrahedra. Resolution time for reducing 
    the residual from $1$ to $10^{-5}$ was respectively $3$ seconds, 
    $34$ seconds and $81$ seconds.\label{fig:sphere}}
\end{figure}

\section{Conclusion}

The method to represent and compute crossfields on 3D domains offer a
lot of advantages.At first, the new formulation is, to our opinion,
way easier to understand geometrically: rotations of tensors, recovery
procedures and projections have a clear geometrical
representation. Then, we have shown that there exist a one-to-one
relationship between our representation and 4th order spherical
harmonics. It should be possible to build a $9 \times 9$ matrix that
allow to change of base.  The $4^{th}$ order tensor representation
used allows to approximate in a very efficient way projections on the
crosses space $\cspace$. The direct consequence is a fast resolution
of the smoothing problem. 

We are aware that this paper is quite theoretical: way more
practical results about this new representation are in our hands:
detection of singularities, boundary conditions, norms... Due to page
limitations, we have deliberatlely made the choice to present basic
results. More practical aspects of that new representationas as well as computations of cross
fields on complex geometries will appear in furthcoming papers.

\begin{acknowledgement}
  This research is supported by the European Research Council (project HEXTREME,
  ERC-2015-AdG-694020) and by the Fond de la Recherche Scientifique de Belgique
  (F.R.S.-FNRS).
\end{acknowledgement}

\bibliographystyle{plain}
\bibliography{ref}   

\begin{thebibliography}{1}

\bibitem{huang2011boundary}
Jin Huang, Yiying Tong, Hongyu Wei, and Hujun Bao.
\newblock Boundary aligned smooth 3d cross-frame field.
\newblock In {\em ACM transactions on graphics (TOG)}, volume~30, page 143.
  ACM, 2011.

\bibitem{ray2016practical}
Nicolas Ray, Dmitry Sokolov, and Bruno L{\'e}vy.
\newblock Practical 3d frame field generation.
\newblock {\em ACM Transactions on Graphics (TOG)}, 35(6):233, 2016.

\bibitem{takayama2008lapped}
Kenshi Takayama, Makoto Okabe, Takashi Ijiri, and Takeo Igarashi.
\newblock Lapped solid textures: filling a model with anisotropic textures.
\newblock In {\em ACM Transactions on Graphics (TOG)}, volume~27, page~53. ACM,
  2008.

\bibitem{vyas2009tensor}
Ved Vyas and Kenji Shimada.
\newblock Tensor-guided hex-dominant mesh generation with targeted all-hex
  regions.
\newblock In {\em proceedings of the 18th International Meshing Roundtable},
  pages 377--396. Springer, 2009.

\bibitem{yamakawa2003fully}
Soji Yamakawa and Kenji Shimada.
\newblock Fully-automated hex-dominant mesh generation with directionality
  control via packing rectangular solid cells.
\newblock {\em International journal for numerical methods in engineering},
  57(15):2099--2129, 2003.

\bibitem{zhang2011sketch}
Guo-Xin Zhang, Song-Pei Du, Yu-Kun Lai, Tianyun Ni, and Shi-Min Hu.
\newblock Sketch guided solid texturing.
\newblock {\em Graphical Models}, 73(3):59--73, 2011.

\end{thebibliography}

\end{document}